\newcommand{\dlg}{DALiuGE}
\newcommand{\pgt}{Physical Graph Template}
\newtheorem{theorem}{Theorem}
\begin{document}

% Copyright
%\setcopyright{acmcopyright}
%\setcopyright{acmlicensed}
%\setcopyright{rightsretained}
%\setcopyright{usgov}
%\setcopyright{usgovmixed}
%\setcopyright{cagov}
%\setcopyright{cagovmixed}

%\CopyrightYear{2018} 
%\setcopyright{acmlicensed} 
%\conferenceinfo{ScienceCloud'18,}{June 11, 2018, Tempe, AZ, USA}
%\isbn{978-1-4503-5863-7/18/06}\acmPrice{$15.00}
%\doi{https://doi.org/10.1145/3217880.3217886}

% DOI
%\doi{10.475/123_4}

% ISBN
\isbn{123-4567-24-567/08/06}

%Conference
%\conferenceinfo{PLDI '13}{June 16--19, 2013, Seattle, WA, USA}

%\acmPrice{\$15.00}

%
% --- Author Metadata here ---
%\conferenceinfo{IWAC 2018}{'18 Tempe, Arizona USA}
%\conferenceinfo{2018}{Submitted to ACM Conference}
%\CopyrightYear{2007} % Allows default copyright year (20XX) to be over-ridden - IF NEED BE.
%\crdata{0-12345-67-8/90/01}  % Allows default copyright data (0-89791-88-6/97/05) to be over-ridden - IF NEED BE.
% --- End of Author Metadata ---
\title{Partitioning SKA Dataflows for Optimal Graph Execution}
%\title{Alternate {\ttlit ACM} SIG Proceedings Paper in LaTeX
%Format\titlenote{(Produces the permission block, and
%copyright information). For use with
%SIG-ALTERNATE.CLS. Supported by ACM.}}
%\subtitle{[Extended Abstract]
%\titlenote{A full version of this paper is available as
%\textit{Author's Guide to Preparing ACM SIG Proceedings Using
%\LaTeX$2_\epsilon$\ and BibTeX} at
%\texttt{www.acm.org/eaddress.htm}}}
%
% You need the command \numberofauthors to handle the 'placement
% and alignment' of the authors beneath the title.
%
% For aesthetic reasons, we recommend 'three authors at a time'
% i.e. three 'name/affiliation blocks' be placed beneath the title.
%
% NOTE: You are NOT restricted in how many 'rows' of
% "name/affiliations" may appear. We just ask that you restrict
% the number of 'columns' to three.
%
% Because of the available 'opening page real-estate'
% we ask you to refrain from putting more than six authors
% (two rows with three columns) beneath the article title.
% More than six makes the first-page appear very cluttered indeed.
%
% Use the \alignauthor commands to handle the names
% and affiliations for an 'aesthetic maximum' of six authors.
% Add names, affiliations, addresses for
% the seventh etc. author(s) as the argument for the
% \additionalauthors command.
% These 'additional authors' will be output/set for you
% without further effort on your part as the last section in
% the body of your article BEFORE References or any Appendices.

\numberofauthors{3} %  in this sample file, there are a *total*
% of EIGHT authors. SIX appear on the 'first-page' (for formatting
% reasons) and the remaining two appear in the \additionalauthors section.
%
\author{
% You can go ahead and credit any number of authors here,
% e.g. one 'row of three' or two rows (consisting of one row of three
% and a second row of one, two or three).
%
% The command \alignauthor (no curly braces needed) should
% precede each author name, affiliation/snail-mail address and
% e-mail address. Additionally, tag each line of
% affiliation/address with \affaddr, and tag the
% e-mail address with \email.
%
% 1st. author
\alignauthor
Chen Wu\\%\titlenote{Corresponding author}\\
       \affaddr{International Centre for Radio Astronomy Research}\\
       \affaddr{The University of Western Australia}\\
       \affaddr{Perth, Australia}\\
       \email{chen.wu@icrar.org}
% 2nd. author
\alignauthor
Andreas Wicenec\\
       \affaddr{International Centre for Radio Astronomy Research}\\
       \affaddr{The University of Western Australia}\\
       \affaddr{Perth, Australia}\\
       \email{andreas.wicenec@icrar.org}
% 3rd. author
\alignauthor 
Rodrigo Tobar\\
       \affaddr{International Centre for Radio Astronomy Research}\\
       \affaddr{The University of Western Australia}\\
       \affaddr{Perth, Australia}\\
       \email{rtobar@icrar.org}
}
% There's nothing stopping you putting the seventh, eighth, etc.
% author on the opening page (as the 'third row') but we ask,
% for aesthetic reasons that you place these 'additional authors'
% in the \additional authors block, viz.

\date{1 April 2018}
% Just remember to make sure that the TOTAL number of authors
% is the number that will appear on the first page PLUS the
% number that will appear in the \additionalauthors section.

\maketitle
\begin{abstract}
Optimizing data-intensive workflow execution is essential to many modern scientific projects such as the Square Kilometre Array (SKA), which will be the largest radio telescope in the world, collecting terabytes of data per second for the next few decades. At the core of the SKA Science Data Processor is the graph execution engine, scheduling tens of thousands of algorithmic components to ingest and transform millions of parallel data chunks in order to solve a series of large-scale inverse problems within the power budget. To tackle this challenge, we have developed the Data Activated Liu Graph Engine (DALiuGE) to manage data processing pipelines for several SKA pathfinder projects. In this paper, we discuss the DALiuGE graph scheduling sub-system. By extending previous studies on graph scheduling and partitioning, we lay the foundation on which we can develop polynomial time optimization methods that minimize both workflow execution time and resource footprint while satisfying resource constraints imposed by individual algorithms. We show preliminary results obtained from three radio astronomy data pipelines.
\end{abstract}

%
% The code below should be generated by the tool at
% http://dl.acm.org/ccs.cfm
% Please copy and paste the code instead of the example below. 
%
\begin{CCSXML}
<ccs2012>
 <concept>
  <concept_id>10010520.10010553.10010562</concept_id>
  <concept_desc>Computer systems organization~Embedded systems</concept_desc>
  <concept_significance>500</concept_significance>
 </concept>
 <concept>
  <concept_id>10010520.10010575.10010755</concept_id>
  <concept_desc>Computer systems organization~Redundancy</concept_desc>
  <concept_significance>300</concept_significance>
 </concept>
 <concept>
  <concept_id>10010520.10010553.10010554</concept_id>
  <concept_desc>Computer systems organization~Robotics</concept_desc>
  <concept_significance>100</concept_significance>
 </concept>
 <concept>
  <concept_id>10003033.10003083.10003095</concept_id>
  <concept_desc>Networks~Network reliability</concept_desc>
  <concept_significance>100</concept_significance>
 </concept>
</ccs2012>  
\end{CCSXML}

\ccsdesc[500]{Computer systems organization~Embedded systems}
\ccsdesc[300]{Computer systems organization~Redundancy}
\ccsdesc{Computer systems organization~Robotics}
\ccsdesc[100]{Networks~Network reliability}

%
% End generated code
%

%
%  Use this command to print the description
%
%\printccsdesc

% We no longer use \terms command
%\terms{Theory}

\keywords{Graph execution; Scheduling;  Square Kilometre Array}

\section{Introduction}
The Square Kilometre Array (SKA) will be the largest radio telescope in the world~\cite{braun2015advancing}. The two components of the first phase of SKA (SKA1) --- SKA-Mid and SKA-Low --- will jointly produce large amounts of data at a rate of one Terabyte (TB) per second, with the second phase data rate reaching at least ten times higher. All this data has to be captured, reduced, processed, and analyzed in near real-time. This poses a great challenge, since the current generation of radio astronomy data processing systems are designed to handle data approximately two to three orders of magnitude smaller than that of the SKA1.

To tackle this challenge, we developed the Data Activated Liu Graph Engine (DALiuGE\footnote{\texttt{https://github.com/ICRAR/daliuge}}) to execute continuous, time-critical, data-intensive workflows in order to produce science-ready data products. Compared to existing astronomical workflow systems, DALiuGE has several advantages such as separation of concerns, data-centric execution, graph-based dataflow scheduling, and native support for streaming processing.

A technical overview of \dlg{} and its operational production systems are described in~\cite{wu2017daliuge}. In this paper, we focus on the \dlg{} graph scheduling sub-system. In particular, we discuss technical details on dataflow partitioning algorithms and implementations.

\begin{figure*}
\centering
\includegraphics[scale=0.3]{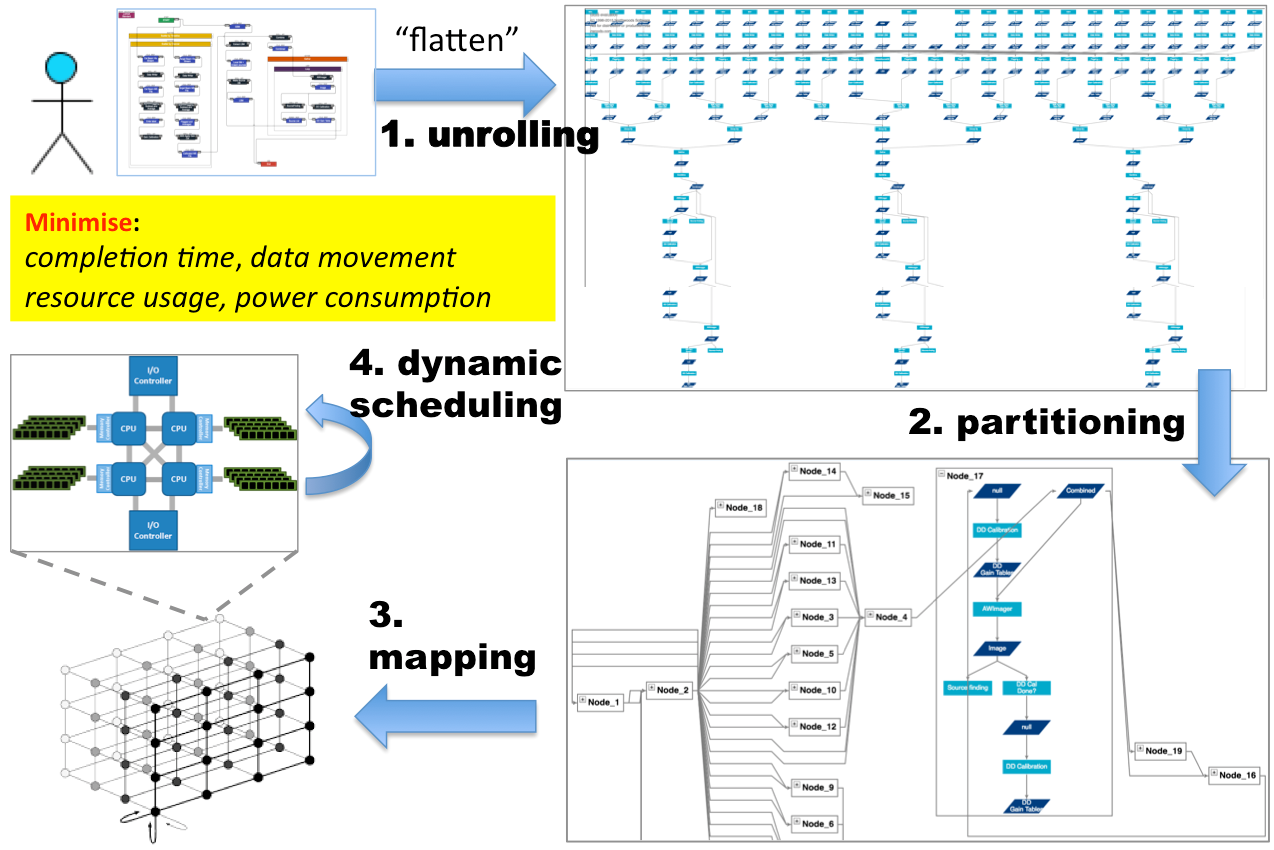}
\caption{The complete dataflow execution cycle consists of four major steps --- unrolling, partitioning, mapping and dynamic scheduling. Both unrolling and partitioning are performed offline. Mapping happens just a few minutes before the workflow execution, and dynamic scheduling is done in real-time during execution. While the first three steps target the entire graph across multiple resources, the last step focuses on tasks using local resources on a single node.}
\label{fig:overall}
\end{figure*}

\section{Related work}
The dataflow computation model~\cite{dennis1975preliminary} represents workflows as Directed Acyclic Graphs (DAG), where vertices are stateless computational tasks (i.e. functions) and edges connect the output of one task with the input of another. Although the dataflow model exploits parallelism inherent in DAGs through data dependencies, mapping an irregular DAG onto hardware resources for optimal execution is an NP-hard problem~\cite{chaudhary1993generalized}. Early work attempted to derive data structures (e.g. assignment graph~\cite{bokhari1981shortest} or allocation graph~\cite{towsley1986allocating}) from the original DAG in order to perform tractable searching and optimisation algorithms (e.g. using the maximum flow solutions~\cite{stone1977multiprocessor}). While these algorithms were able to uncover an optimal solution in polynomial time, the growth rate of the assignment graph is \(O(N \times M)\), where \(N\) denotes the number of vertices in the original DAG and \(M\) denotes the number of available processors. Therefore, as the DAG size and resource pool grows substantially (e.g. from tens of tasks running on a laptop to millions of tasks running on thousands of processors), these exact optimisation methods quickly become intractable.

%
% rtobar
%
% I would re-order these two last paragraphs of this section
% so they can be read more linearly as such:
%
% * Start mentioning (like now) that there are two approaches: 1 and 2 phased
% * Briefly describe the one phase approach (like now), and then describe
%   its advantages (beginning of next paragraph) and weaknesses
% * Open a new paragraph describing briefly the second approach (like now),
%   mention there the RDA, and then also mention how most 2 phase algos\
%   were actually designed for single machine scheduling
% * Open a new paragraph, talk about our two-phase approach, which aspects
%   of the two-phase approach makes it attractive (the RDA), and which changes
%   we are putting in place to make it useful for our SKA problem
%
A variety of heuristics-based algorithms~\cite{kwok1999static} have been developed for scheduling DAGs on multiprocessors. These heuristics in general fall into two alternative approaches --- one-phase or two-phase. In the one-phase approach (e.g., the widely-used HEFT algorithm~\cite{topcuoglu2002performance}), DAG scheduling is performed by directly mapping a ranked list of workflow tasks to another ranked list of resource units (e.g. processors or nodes) based on some aggregated run-time workflow profiles and resource statistics. In contrast, the two-phase approach ~\cite{liou1997comparison, sarkar1987partitioning} first partitions the DAG into a number of clusters based on heuristics such as load balancing~\cite{karypis1998multilevelk}, minimal data movement, etc. In the second phase, these clusters are then mapped onto actual hardware resources for execution. We currently adopt the two-phase approach because the output from the first phase encodes a resource demand abstraction (RDA) from intrinsic properties of the DAG. The RDA becomes the input for resource mapping in the second phase. More importantly, the RDA provides a more accurate estimate of resource demand for future capacity planning and observation scheduling for the telescope manager. However, most two-phase algorithms were targeted to multiprocessors on a single compute node, where each workflow task consumes exactly one processor. Our workflows need to run across clusters of compute nodes, each consisting of multiple processors. More importantly, each workflow task inherently demands multiple yet different number of processors/cores and different amount of memories. Dealing with this kind of complexity in resource demand and multiplicity in resource capabilities is one of our contributions in this paper. Moreover, unlike most existing DAG scheduling/mapping algorithms, our partitioning algorithm aims to reduce the overall resource footprint given these complexities and constraints.

On the other hand, the advantage of the one-phase approach is its flexibility to incorporate run-time resource heterogeneity. We leave for our future work a thorough investigation and application of the one-phase approach to our DAG mapping problem. 

Although significant progress~\cite{bateni2017affinity, martella2017spinner, tsourakakis2014fennel} has been made recently to partition vary large graphs for various social network analysis and machine learning applications, direct application of these graph partitioning algorithms for dataflow partitioning often leads to sub-optimal solutions. This is because the DAG (or general graph) representation \(G\) of the dataflow does not encode the notion of workflow execution working set \(W_t\) - a small set of workflow tasks that are being executed at time \(t\). Only tasks in \(W_t\) consume resources, other tasks are either waiting for the completion of their ``upstream" tasks in \(W_t\) or have already completed their executions. Therefore, partitioning the entire graph \(G\) (e.g. in the order of millions of nodes) for subsequent resource mapping is (1) wasteful given that \(|W_t| \ll |G|\), and (2) ill-posed since \(W_t\) is time-dependent and is unknown at the time of graph partitioning.

\section{Overview of Graph Execution}
Following the two-phase approach, the four steps of the graph execution are illustrated in Figure \ref{fig:overall}. We briefly introduce them in this section. Readers are referred to for a detailed technical discussion on graph execution.

Starting from the top left corner, a staff astronomer composes a logical graph representing high-level data processing capabilities (e.g., ``Image deconvolution") using resource oblivious dataflow constructs and workflow task components. The \textbf{first step} unrolls the logical graph by expanding all parallel branches and loops, instantiating tasks in all branches and iterations and connecting them with directed edges as per the logical graph definition. The result of unrolling is the \pgt{} (PGT) shown in the top right corner. It should be noted that, unlike traditional dataflow graph representations, DALiuGE models data as well as tasks as graph vertices. From a workflow viewpoint, all data items are essentially ``data tasks" (shown as parallelograms in Figure \ref{fig:overall}) that can trigger the execution of their consumer tasks (shown as rectangles).

The second step, i.e. the focus of this paper, divides the PGT into a set of logical partitions such that certain performance requirements (e.g. total completion time, total data movement, etc.) are met under given constraints (e.g. resource footprint, collocation criteria, device locality, etc.). This step outputs the \pgt{} Partition (PGTP), which provides the Telescope Manager with an approximate solution to construct the observation scheduling blocks months or weeks prior to observation and compute resource allocation. An example of PGTP is shown at the bottom right of Figure \ref{fig:overall}, where 19 partitions are produced and one of them is visually expanded with 11 enclosing workflow tasks. Furthermore, a resource reservation that contains 19 nodes can be submitted to the telescope manager weeks before the associated observation takes place.

The third step maps each logical partition of the PG onto a given set of currently available resources in certain optimal ways. In principle, each partition is placed onto a physical compute node in the cluster. Such placement requires real-time information on resource availability, and we currently assume resource pools consisting of nodes with identical capabilities of computing, storage, and interconnect. In cases where the number of partitions \(p\) is greater than the number of available nodes \(m\), DALiuGE can be configured to merge the \(p\) PGT partitions into \(m\) virtual clusters with the goal of balancing the overall workload (both compute time and memory usage) evenly before mapping.

The final step involves optimal execution of tasks that have been allocated to a single node by the previous two steps. DALiuGE currently offloads this step to local schedulers provided by the host OS running on each compute node. We are currently working on the integration of graph-based GPU schedulers for dynamically scheduling GPU accelerated workflow tasks on single node with multiple GPUs.

In the following sections, we focus solely on the technical details of the second step --- dataflow partitioning.

\section{Dataflow partitioning}
During graph partitioning, a PGT of \(N\) vertices is decomposed into \(M\) partitions, each of which conceptually represents a compute node with a pre-defined resource capacity vector \(\boldsymbol{C}\). The goal of graph partitioning is to obtain an estimate on the minimum number \(M^*\) of compute nodes needed to execute the PGT and its corresponding PGT completion time \(T_{M^*}\). Initially, the partitioning algorithm lets \(M = N\) with each vertex being an individual partition. The algorithm then iteratively decreases \(M\) through partition merging (line 11 in Algorithm \ref{algo:dez}). This is equivalent to keeping the PGT completion time \(T\) monotonically non-increasing as exemplified in Figure \ref{fig:lpl_parts}. See \textbf{Theorem \ref{theory:lpl_time_decreasing}} for a proof. Therefore, a partition scheme that produces \(M^*\) ideally achieves the minimum PGT completion time \(T^*\), thus \(T_{M^*} = T^*\) under the current graph partitioning algorithm. This follows the ``data locality" principle which suggests that the unit cost of data movement between two partitions is far greater than that within the same partition. Therefore, fewer partitions lead to faster completion with less data movement, resource usage and lower operational cost.

On the other hand, a smaller \(M^*\) corresponds to a greater resource demand per partition since more Drops are allocated to each partition. This means the aggregated resource demand from concurrently-running Drops in a given partition is more likely to exceed \(\boldsymbol{C}\), slowing down the graph execution due to resource over-subscription. An ideal partitioning solution not only obtains an optimal \(M^*\) but also ensures that resource demands in all partitions stay below \(\boldsymbol{C}\) at any point during the graph execution. Satisfying this constraint avoids unpredictable execution delay due to resource over-subscription, thus ensuring \(T_{M^*} = T^*\). Formally, the graph partitioning is formulated as a constrained optimisation problem:

\begin{equation}
\begin{array}{rrclcl}
%\displaystyle \min_{x} & \multicolumn{3}{l}{c^T x} \\
\displaystyle \min_{p} & \multicolumn{3}{l}{M(PGT, p)} \\
\textrm{s.t.} & R_i(t) & \leq & \boldsymbol{C}, & i = 1, \ldots, M., & \forall t \in \Big[0, T(PGT, p)\Big] \\
\end{array}
\label{eq:problem_def}
\end{equation}
where \(M(\cdot)\) is a function that outputs the number \(M\) of partitions given a \(PGT\) and a partition solution \(p\). \(T(\cdot)\) is a function that outputs the completion time \(T\) given a \(PGT\) and a partition solution \(p\). \(R_i(t)\) denotes the aggregated resource demand from all running Drops in partition \(i\) at time \(t\). We refer to the constraint defined in Equation \ref{eq:problem_def} as the \textbf{DoP constraint}, where ``DoP" stands for \textit{Degree of Parallelism}. Figure \ref{fig:res_dmd_per_part} exemplifies partitioning solutions that do (not) satisfy the DoP constraint.

\begin{figure}[h!]
\centering
\includegraphics[scale=0.17]{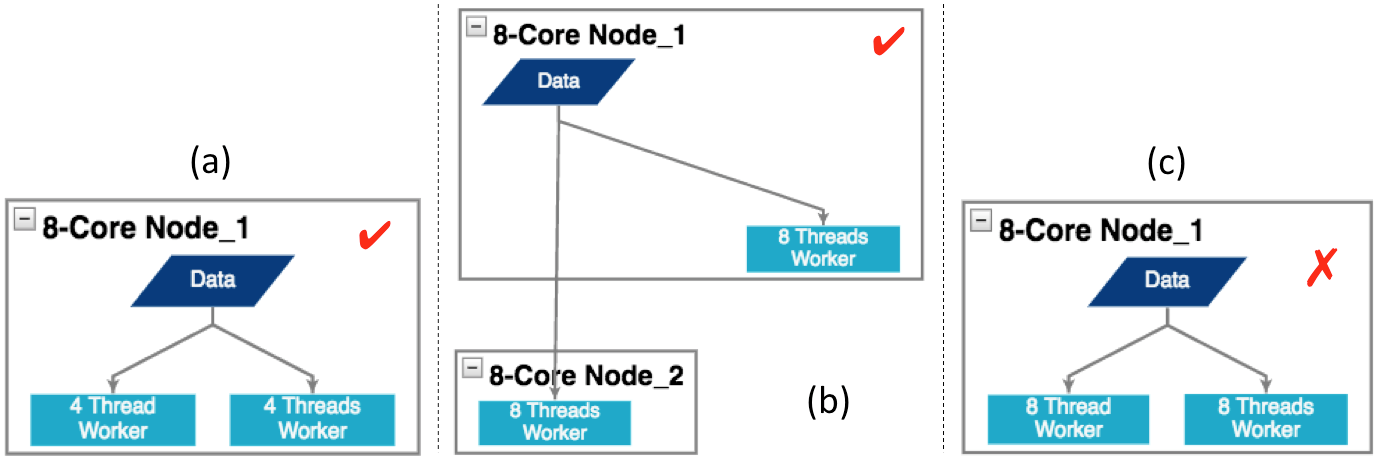}
\caption{Three solutions to partitioning a simple fork-like \pgt{}. Solution (a) places all three Drops inside a single partition. So the two worker Drops will run in parallel after the data becomes available, thus consuming 8 cores (four threads each) at the same time. This satisfies the DoP constraint given the resource capacity \(\boldsymbol{C}\) for a compute node includes 8 cores. However, solution (c) does not satisfy the DoP constraint since at some point 16 threads will be running in parallel on a single 8-Core machine. Consequently, the expected completion time for either worker is no longer guaranteed due to resource over-subscription. To remedy this, solution (b) separates the two worker Drops in two different partitions, each of which has sufficient resource capacity to execute 8 threads. Although the data movement between the two partitions incurs additional cost compared to Solution (c), Solution (b) produces far more reliable estimates on both completion time and resource demands with a potentially shorter completion time thanks to adequate resource provisioning.}
\label{fig:res_dmd_per_part}
\end{figure}

Once the optimal graph partitioning solution \(p\) is available, both \(M^*\) (known as the \textit{Physical Graph Template Partition}) and  \(T_{M^*}\) (i.e. \(T^*\)) are used by the telescope manager for the generation of observation and computing resource schedules well before the observation takes place. \newline

\subsection{Partitioning Algorithm}
The main idea of the partitioning algorithm (Algorithm \ref{algo:dez}) is to iteratively reduce data movement between inter-node Drops by ``merging" them into the same node, where the cost of intra-node communication is negligible. Given a PGT \(g\), the algorithm sorts all edges in \(g\) based on their weights in a descending order. The edge weight here denotes the volume of data ``on the move" from one Drop to the next. Each drop is initially allocated to a separate node. Then going through all edges in a descending order of their weights, the algorithm merges two partitions associated with the two Drops on both ends of the edge if the merged partition meets the DoP constraint defined in Equation \ref{eq:problem_def}. The algorithm is ``greedy" since it reduces larger costs before dealing with smaller ones. However, this may not necessarily lead to a globally optimal solution especially for large graphs. We are currently investigating various local search heuristics to overcome this limitation.

Although the iterative edge zeroing procedure is based on the graph clustering algorithm \cite{sarkar1987partitioning}, we added two important additional changes. First we allow two existing partitions to re-merge again in order to further reduce the number of partitions, which in turn reduces the total completion time as suggested in \textbf{Theorem \ref{theory:lpl_time_decreasing}}. Second, we evaluate the DoP constraint in order to accept or reject partition merging (line 12) proposals. The evaluation of the DoP constraint not only considers each graph vertex's processing requirement in terms of maximum number of concurrent threads, memory usage, etc., but also incorporates predefined resource capacities for each partition including number of cores, memory capacity, etc.

\IncMargin{1em}
\begin{algorithm}[h!]
\DontPrintSemicolon
\SetKwInOut{Input}{input}\SetKwInOut{Output}{output}
\SetKwData{Key}{key}
\SetKwData{Reverse}{reverse}
\SetKwData{Element}{element}
\SetKwData{Null}{is null}
\SetKwData{NotNull}{is not null}
\SetKw{kwAnd}{and}

 \Input{A DAG $g$ with a list $el$ of edges and a list $nl$ of nodes} %\times l$}
 \Output{A list $l$ of partitions}

 %\KwData{DAG}
 %\KwResult{a list of partitions }
 %initialization\;
 \BlankLine
 %$el\leftarrow$ the set of all edges in $g$\;
 initialise $l$ as an empty list\;
 %$t\_sorted\leftarrow g.topological\_sort()$\;
 %$lpl\leftarrow g.longest\_path\_len(t\_sorted)$\tcp*{include both node and edge weights}
 $el.sort\_by\_weight(\Reverse \leftarrow true)$%\tcp*{greediness}
 
 \ForEach{\Element $n$ of $nl$}{
    %\uIf(\tcp*[f]{each vertex is assigned to its own partition}){$n.part$ \Null}
    {\label{lt}
    $part\leftarrow create\_partition()$\;
    $part.add(n)$%\tcp*{each vertex is initially assigned to a new partition}
    $l.add(part)$\;
    }
 }
 
 \ForEach{\Element $e$ of $el$}{
  $origin\_weight\leftarrow e.weight$\;
  $e.weight\leftarrow 0$\tcp*{edge zeroing}
  %$new\_lpl\leftarrow g.longest\_path\_len(t\_sorted)$ \tcp*{re-use the topological sort}
  %\uIf(\tcp*[f]{partition candidates}){$new\_lpl \leq lpl$}{\label{lt}
    $u, v\leftarrow e.nodes()$\;
    %\If{$u.part \neq v.part$}{
    $new\_part\leftarrow try\_merge\_partition(l, u, v)$%\tcp*{merge existing partitions}
    
    \If(){$new\_part == NULL$}{
        $e.weight\leftarrow origin\_weight$%\tcp*{hence do not merge}
        %$lpl\leftarrow new\_lpl$\;
    }
    %}
   
 }
 
 \Return $l$
 
 \caption{The partitioning algorithm based on \cite{sarkar1987partitioning} with two important additions --- evaluation of the DoP constraint and merger of existing partitions}
 \label{algo:dez}
\end{algorithm}
\DecMargin{1em}

\begin{theorem}%[non-increasing completion time]
\label{theory:lpl_time_decreasing}
The edge zeroing statement at line 10 in Algorithm \ref{algo:dez} ensures the completion time \(T\) of \(g\) is strictly non-increasing.
\end{theorem}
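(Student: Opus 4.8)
\noindent\emph{Proof idea.} The plan is to identify the completion time \(T(g)\) with the \emph{longest weighted path length} (LPL) through the DAG \(g\) --- the quantity visualised in Figure~\ref{fig:lpl_parts} --- and then observe that zeroing one edge weight is a purely ``shrinking'' operation on this quantity. Concretely, I would write \(T(g) = \max_{P} \ell_g(P)\), where \(P\) ranges over all directed source-to-sink paths of \(g\) and \(\ell_g(P) = \sum_{v \in P} \tau(v) + \sum_{e \in P} w_g(e)\) sums the vertex execution times \(\tau(v)\) and the edge (data-movement) weights \(w_g(e) \ge 0\) along \(P\). This is the standard critical-path model, and it is the cost that governs completion time in the regime relevant here, in which enough partitions are available that no intra-partition serialisation occurs (the DoP constraint, handled separately, is precisely what keeps us in this regime).

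First I would fix the edge \(e_0\) zeroed at line~10 and let \(g'\) denote the graph immediately after the assignment, so that \(g'\) has the same vertex set, the same edge set, the same vertex weights \(\tau\), and edge weights \(w_{g'}(e) = w_g(e)\) for \(e \ne e_0\) and \(w_{g'}(e_0) = 0\). Because the topology of \(g'\) is identical to that of \(g\), the collection of directed paths is literally the same in both graphs, so the two maxima defining \(T(g)\) and \(T(g')\) are taken over the same index set --- this is the small bookkeeping point that makes the comparison meaningful. Next comes the one-line case analysis on an arbitrary path \(P\): if \(e_0 \notin P\) then \(\ell_{g'}(P) = \ell_g(P)\); if \(e_0 \in P\) then \(\ell_{g'}(P) = \ell_g(P) - w_g(e_0) \le \ell_g(P)\), using \(w_g(e_0) \ge 0\). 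In either case \(\ell_{g'}(P) \le \ell_g(P)\), and taking the maximum over all \(P\) yields \(T(g') = \max_P \ell_{g'}(P) \le \max_P \ell_g(P) = T(g)\). Since line~10 is the only statement in Algorithm~\ref{algo:dez} that ever decreases an edge weight, iterating this bound over the successive accepted mergers shows \(T\) is monotonically non-increasing throughout the run, which is the claim.

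The argument itself is elementary; the only place an objection could arise --- and thus the ``hard part'' --- is in justifying the model, i.e. that \(T\) may legitimately be taken to be the LPL rather than a schedule-dependent makespan. I would address this by appealing to the data-locality assumption already stated in the paper (inter-partition transfer dominates, intra-partition transfer is negligible, so merging two partitions is faithfully modelled by zeroing the connecting edge) and to the fact that while the DoP constraint holds there is no resource over-subscription, so the makespan coincides with the critical-path length. One remark I would add for precision: the property that is actually guaranteed is \emph{weak} monotonicity; a strict decrease occurs exactly when \(e_0\) lies on a current critical path, and otherwise \(T\) is unchanged --- so ``strictly non-increasing'' in the statement should be understood as ``monotonically non-increasing''.
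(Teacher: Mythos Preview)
Your argument is correct and matches the paper's own proof in substance: both identify the completion time with the length of the longest weighted path and observe that setting one nonnegative edge weight to zero can only shorten (or leave unchanged) every path, hence the maximum. The paper phrases this as an explicit four-way case split on whether \(e\) lies on the current critical path and whether the critical path changes, whereas you collapse those cases into the single pointwise inequality \(\ell_{g'}(P)\le \ell_g(P)\) followed by taking \(\max_P\); the content is the same.
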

\begin{proof}
If the edge \(e\) is on the longest path \(L\) of \(g\) with a length \(T\), there are two possibilities after \(e\)'s weight becomes zero --- \(L\) remains the longest path of \(g\) or another path \(L'\) becomes the longest path of \(g\). In the first case, let \(T'\) be the new length of \(L\). It is easy to verify that \(T' == T - e.weight < T\). In the second case, let \(T'\) be the length of \(L'\). It must be true that \(T' \leq T\) because otherwise \(L'\) (rather than \(L\)) would have been the longest path before the edge zeroing takes place.

If the edge \(e\) is not on the longest path \(L\) of \(g\), there are also two possibilities after \(e\)'s weight becomes zero --- \(e\) remains off the longest path \(L\) of \(g\) or \(e\) becomes part of the ``new" longest path \(L'\) of \(g\). In the first case, since \(L\) is not affected whatsoever, its completion time \(T\) remains the same, thus non-increasing. In the second case, let \(T'\) be the length of \(L'\). It must be true that \(T' \leq T\) because otherwise \(L'\) (rather than \(L\)) would have been the longest path before the edge zeroing takes place.\end{proof}

%\begin{figure}[h!]
%\centering
%\includegraphics[scale=0.37]{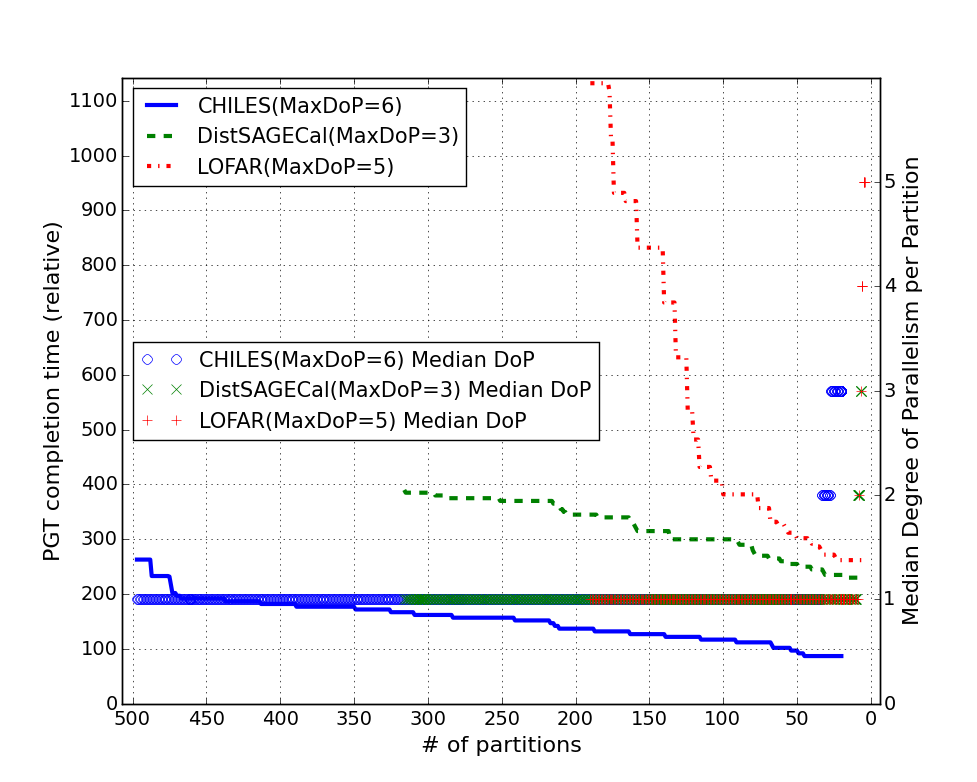}
%\caption{The PGT completion time is monotonically non-increasing as the number of partitions decreases for three different radio astronomy pipeline graphs. It also shows the partition solution that produces the minimum number \(M^*\) of partitions (i.e. the bottom right end of each curve) also results in the shortest execution time \(T^*\).}
%\label{fig:lpl_parts}
%\end{figure}

\begin{figure*}
\centering
\includegraphics[scale=0.49]{}
\caption{The PGT completion time is monotonically non-increasing as the number of partitions decreases for three different radio astronomy pipeline graphs. It also shows the partition solution that produces the minimum number \(M^*\) of partitions (i.e. the bottom right end of each curve) also results in the shortest execution time \(T^*\).}
\label{fig:lpl_parts}
\end{figure*}

\subsection{DoP Constraint Evaluation}
In this subsection, we discuss the DoP evaluation algorithm defined in the \texttt{try\_merge\_partition} function called at line 12 in Algorithm \ref{algo:dez}. As shown in Equation \ref{eq:problem_def}, this boils down to efficiently computing the total resource usage \(R_i(t)\) summed over all running Drops inside a given partition \(i\) at a particular time \(t\). To do this, we first establish the equivalence between the set \(D(t)\) of Drops running in parallel at time \(t\) and the concept of \textit{antichain}~\cite{marcus2008graph} --- a set of \textbf{mutually unreachable} vertices of a DAG \(g\) associated with a given partition.

\begin{theorem}%[non-increasing completion time]
\label{theory:antichain}
If all Drops in \(D(t)\) are running in a non-streaming mode,  \(D(t)\) is an antichain of \(g\).
\end{theorem}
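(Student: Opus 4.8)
\noindent\emph{Proof proposal.} The plan is to argue by contradiction, showing that two Drops that are reachable from one another cannot be simultaneously running once execution is non-streaming. First I would make the non-streaming semantics explicit: in this mode a Drop does not begin executing until every Drop feeding it (every immediate predecessor in $g$, be it a data Drop or a compute Drop) has completed, and a Drop releases its output only upon its own completion. Writing $s(v)$ and $f(v)$ for the start and finish times of a Drop $v$, this says that for every edge $(v,w)$ of $g$ we have $f(v)\le s(w)$, together with $s(v)<f(v)$ (each Drop occupies a non-empty running interval; if one allows zero-duration Drops the argument still goes through with the obvious bookkeeping on strict versus non-strict inequalities).

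Next I would propagate this one-step inequality along a directed path. Suppose, for contradiction, that $D(t)$ is not an antichain of $g$; then there are distinct $u,v\in D(t)$ with one reachable from the other, say (without loss of generality) a directed path $v=x_0\to x_1\to\cdots\to x_k=u$ in $g$ with $k\ge 1$. Applying the edge inequality to each hop and interleaving with $s(x_i)<f(x_i)$ gives $f(x_0)\le s(x_1)<f(x_1)\le s(x_2)<\cdots\le s(x_k)$, hence $f(v)\le s(u)$ by a trivial induction on $k$.

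Finally I would combine this with the fact that both $u$ and $v$ lie in $D(t)$, i.e. are running at time $t$, which by definition means $s(v)\le t<f(v)$ and $s(u)\le t<f(u)$. Chaining, $t<f(v)\le s(u)\le t$, a contradiction. Therefore no two elements of $D(t)$ are comparable under reachability in $g$, which is exactly the assertion that $D(t)$ is an antichain of $g$.

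The main obstacle is not the chaining step, which is routine, but pinning down the execution model crisply enough to justify the one-step inequality $f(v)\le s(w)$ — in particular making sure the alternation of data Drops and compute Drops along a path does not break precedence (a data Drop ``completes'' precisely when it is fully materialised, which is the very event that may trigger its consumer), and fixing the convention for a Drop's running interval (the half-open $[s(v),f(v))$ is the natural choice and is what makes the strict inequalities line up). A secondary subtlety worth one sentence: the reachability underlying the antichain is taken within the partition's DAG $g$, and since $D(t)$ here is the set of co-running Drops inside that partition while any path in $g$ is also a path in the global graph, restricting to $g$ only strengthens the conclusion.
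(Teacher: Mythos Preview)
Your proposal is correct and follows essentially the same route as the paper: both argue that in non-streaming mode reachability forces a precedence dependency, so two comparable Drops could not be running at the same instant, whence $D(t)$ is an antichain. The paper's proof is a one-sentence version of your argument (it simply notes that non-streaming excludes pipelining and that any remaining parallelism among co-running Drops requires mutual unreachability), whereas you have made the timing semantics and the path-chaining explicit.
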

\begin{proof}
The non-streaming running mode excludes one possible form of parallelism --- pipelining. All other forms of parallelisms require Drops in \(D(t)\) be mutually unreachable on \(r\) because otherwise they would never have been running in parallel due to their inter-dependencies as a result of reachability.
\end{proof}

We define the \textbf{length} \(L\) of an antichain \(D(t)\) as the number of Drops in \(D(t)\), and define the \textbf{weighted length} \(W\) of an antichain \(D(t)\) as the aggregated weight summed over all Drops in \(D(t)\). The weight of the \(j\)th Drop \(d_j\) in an antichain is the pre-determined peak resource usage denoted by \(w(d_j)\). Let \(\boldsymbol{A}\) denote the set of all antichains in an partition graph \(i\). It then follows from Theorem \ref{theory:antichain} that the total resource usage \(R_i(t)\) is bounded by some antichain(s) \(D\) that has the maximum (longest) weighted length amongst all antichains in \(\boldsymbol{A}\):
\begin{equation}
\begin{split}
R_i(t) \leq{} & W_{max} = \max_{D} \sum_{j=1}^{L} w(d_j), \\
&\text{where} \; d_j \in D, \; L = |D|, \; D \in \boldsymbol{A}, \; \forall t \in \Big[0, T(PGT, p)\Big]
%\begin{array}{rrclcl}
%\displaystyle \min_{x} & \multicolumn{3}{l}{c^T x} \\
%\displaystyle R_i(t)_{max} \leq A_{max} = \max_{\boldsymbol{A}} & \multicolumn{3}{l}{w(\boldsymbol{A})} \\
%\end{array}
\end{split}
\label{eq:maxantichain}
\end{equation}
Equation \ref{eq:maxantichain} bounds a time-dependent value \(R_i(t)\) by a time-invariant constant \(W_{max}\) such that if \(W_{max} \leq \boldsymbol{C}\) for a given partition, the constraint condition \(R_i(t) \leq \boldsymbol{C}\) in Equation \ref{eq:problem_def} will be satisfied. However, finding the antichain \(D^*\) that produces \(W_{max}\) is not trivial since the cardinality of \(\boldsymbol{A}\) --- the total number of antichains in a partition graph \(g\) --- can be in the order of \(2^n\), with \(n\) being the number of vertices in \(g\). Therefore, enumeration and evaluation of all antichains is computationally unfeasible in practice, where a typical partition has at least tens or even hundreds of tasks (e.g. there could be up to one billion antichains for a graph with merely 30 vertices).

To compute the maximum antichain length for a given graph in polynomial time, one can apply Dilworth's Theorem \cite{dilworth1950decomposition}, which states that the maximum length of an antichain is equal to the minimum number of chains needed to fully ``cover" the graph. In particular Fulkerson \cite{fulkerson1956note} established the equivalence between the maximum antichain length and the maximum matching in a constructed split graph (a.k.a. bipartite graph). As a result, the longest antichain --- the antichain that has the maximum cardinality --- of a graph can be discovered in \(O(|E|\sqrt{|V|})\) time.
However, Equation \ref{eq:maxantichain} suggests that the longest antichain does not necessarily have the longest weighted length unless \(w(d_j) = 1, \; \forall j \in [1, L]\). Hence, whilst we can efficiently solve \(W_{max}\) for a special case where each Drop consumes only one unit of resource (e.g. 1 core, 1G of RAM, etc.), we need a different algorithm to evaluate more generic cases where Drops consume arbitrary units of resources (e.g. 16 cores, 375 MB of RAM).

%More specifically, len(min\_num\_chain) == (cardinality(dag) - len(max\_matching(bpg))). This algorithm constructs a ``split" bipartite graph \(S\) from the original DAG \(G\).  

In the following, we discuss details of Algorithm \ref{algo:k-family} that efficiently computes \(W_{max}\) for generic cases based on Cong \cite{cong1993computing} to compute a maximum weighted \(k\)-family. While a \(k\)-family covers a union of at most \(k\) antichains in a DAG, we are interested only in a special case (where \(k = 1\)) in order to solve our problem of computing the maximum weighted length of a single antichain.

\IncMargin{1em}
\begin{algorithm}[h!]
\DontPrintSemicolon
\SetKwInOut{Input}{input}
\SetKwInOut{Output}{output}
\SetKw{kwAnd}{and}
\SetKwData{Key}{key}
\SetKwData{Reverse}{reverse}
\SetKwData{Element}{element}
\SetKwData{Null}{is null}
\SetKwData{null}{null}
\SetKwData{NotNull}{is not null}
\SetKwData{This}{this}
\SetKwData{Boolean}{boolean}
\SetKwData{Capacity}{capacity}
\SetKwData{Cost}{cost}
\SetKwData{Infinity}{infinity}
\SetKwProg{Fn}{function}{:}{end}

 \Input{partitions $A$ and $B$ with their associated DAGs $g_A$ and $g_B$ 
 %\\the split graph $S_A$ of $A$ with source $node\_S$ and target $node\_T$
 \\ an optional $g\_dag$ representing the unpartitioned physical graph template}
 %\\ the maximum Degree of Parallelism $max\_dop$ allowed in an partition} %\times l$}
 \Output{the maximum weighted antichain length of the merged partition $A \bigcup B$}
 
 \BlankLine

\Fn{get\_pi\_solution(g)}{
    $S\leftarrow \text{create\_split\_graph}(g)$\;
    $H\leftarrow \text{admissible\_graph}(S)$\;
    $f'\leftarrow \text{maximum\_flow}(H, s, t)$\;
    $R\leftarrow \text{residual\_graph}(H, f')$\;
    \ForEach{\Element $r\_node \in R.$\text{nodes}$()$}{
     
     \lIf{$R.\text{has\_path}(s, r\_node)$}{
        $pi[r\_node]\leftarrow 0$
     }\lElse{
        $pi[r\_node]\leftarrow 1$
     }
    %$S.add\_edge(elp.split()[0], node\_y)$\tcp*{transitive closure for upstream}
    }
    \Return $pi$
}
 
  %\BlankLine
 %$m\_dag = g_A \bigcup g_B$\tcp*{union of two DAGs}
 
 %\ForEach{\Element $node \in g_B$}{
 %   $\text{add\_to\_split\_graph}(node, g_A \bigcup g_B)$
 %}
 
 $pi\leftarrow \text{get\_pi\_solution}(g_A \bigcup g_B)$\;
 $W_{max}\leftarrow 0$\;
 \For{$h\gets0$ \KwTo $1$}{
    \ForEach{\Element $nd_x \in S.X$}{
            $nd_y\leftarrow S.counter\_part(nd_x)$\;
            \If{$h = 1 - pi[nd_x] + pi[node\_S]$ \kwAnd $1 = pi[nd_y] - pi[nd_x]$}{
                $W_{max}\leftarrow W_{max} + S.\text{edge}(node\_S, nd_x).\Capacity$
            }
        
    }
 }
 \Return $W_{max}$%\leq max\_dop$
 \caption{Calculate \(W_{max}\), the maximum weighted antichain length in a partition}
 \label{algo:k-family}
\end{algorithm}
\DecMargin{1em}

The central idea of Algorithm \ref{algo:k-family} is to exploit the equivalence between the weighted maximum anti-chain of the original DAG \(g\) and the minimum-cost maximum-flow (MCMF) solution of the split graph \(S\) created at Line 2. The equivalence is proved in \cite{cong1993computing} and more generally in~\cite{cameron1985antichain}. Note that number of nodes of \(S\) is \(2V + 2\) where \(V\) is the number of the original DAG, which is the union \(g\) of the two DAGs \(g_A\) and \(g_B\). This ensures that a polynomial algorithm on \(S\) remains tractable on \(g\). 

To find the MCMF solution, we first derive the admissible graph \(H\) from \(S\) (line 3), and run the normal maximum flow algorithm~\cite{goldberg1988new} to obtain the flow \(f'\) in \(O(V^2\sqrt{E})\) time (line 4). We then construct the residual graph \(R\) from \(f'\) (line 5). \(R\) has the identical set of vertices as \(H\), and if there are no edges going from the source vertex \(s\) of \(R\) to some vertex \(x\), then we set the node potential \(\pi\) of \(x\) to 1 (line 8). In the end, the maximum weighted antichain \(W_{max}\) is calculated (line 14 to 20) based on expressions defined in Theorem 3.1 \cite{cong1993computing}. Figure \ref{fig:lpl_parts} shows the results of running Algorithm \ref{algo:dez} and \ref{algo:k-family} by scheduling three different radio interferometry imaging workflows.

\section{Conclusions}
Optimal scheduling of large-scale, data-intensive workflows is challenging. In this paper, we discussed related work on graph scheduling and proposed polynomial time optimization methods that minimize both workflow execution time and resource footprint while meeting resource demand constraints imposed by individual algorithms. We show preliminary results obtained from three radio astronomy data pipelines.
%\end{document}  % This is where a 'short' article might terminate

%ACKNOWLEDGMENTS are optional
%\section{Acknowledgments}
%This section is optional; it is a location for you
%to acknowledge grants, funding, editing assistance and
%what have you.  In the present case, for example, the
%authors would like to thank Gerald Murray of ACM for
%his help in codifying this \textit{Author's Guide}
%and the \textbf{.cls} and \textbf{.tex} files that it describes.

%
% The following two commands are all you need in the
% initial runs of your .tex file to
% produce the bibliography for the citations in your paper.
\bibliographystyle{abbrv}
\bibliography{sig-alternate-sample}  % sigproc.bib is the name of the Bibliography in this case
% You must have a proper ".bib" file
%  and remember to run:
% latex bibtex latex latex
% to resolve all references
%
% ACM needs 'a single self-contained file'!
%
%APPENDICES are optional
%\balancecolumns

\end{document}